\documentclass[12pt]{article}
\usepackage{enumerate}
\usepackage{amsfonts}
\usepackage{amsmath}
\usepackage{amssymb}
\usepackage{amsthm}
\usepackage{latexsym}
\usepackage{color}
\usepackage{tcolorbox}

\setcounter{MaxMatrixCols}{10}

\def\H{\mathcal{H}}
\def\HH{\mathcal{H}}

\def\D{\mathfrak{D}}
\def\C{\mathfrak{C}}
\def\T{\mathfrak{T}}

\def\B{\mathfrak{B}}
\def\N{\mathbb{N}}
\newcommand{\FF}{\mathfrak{F}}

\newcommand{\id}{\mathrm{Id}}
\newcommand{\Tr}{\mathrm{Tr}}

\newcommand{\shs}{\hspace{1pt}}
\newcounter{defin}  \newcounter{lemma}  \newcounter{theorem}
\newcounter{property} \newcounter{corol}  \newcounter{remark} \newcounter{example}
\newenvironment{lemma}{\par\refstepcounter{lemma}     \textbf{Lemma \thelemma.} }{\rm\par}
\newenvironment{theorem}{\par\refstepcounter{theorem}     \textbf{Theorem \thetheorem.}\ }{\rm\par}
\newenvironment{property}{\par\refstepcounter{property}     \textbf{Proposition \theproperty.}\ }{\rm\par}
\newenvironment{corollary}{\par\refstepcounter{corol}     \textbf{Corollary \thecorol.} }{\rm\par}

\newenvironment{remark}{\par\refstepcounter{remark}     \textbf{Remark \theremark.}}{\rm\par}

\textwidth=155mm \oddsidemargin=5mm \topmargin=-5mm \textheight=22cm

\begin{document}

\title{Continuity of the relative entropy of resource}

\author{Ludovico Lami,\thanks{Korteweg--de Vries Institute for Mathematics, Institute for Theoretical Physics, and QuSoft, University of Amsterdam, the Netherlands; e-mail:ludovico.lami@gmail.com}\quad Maksim Shirokov\thanks{Steklov Mathematical Institute, Moscow, Russia, e-mail:msh@mi-ras.ru}}

\maketitle

\begin{abstract}
A  criterion  of local continuity of the relative entropy of resource --- the relative entropy distance to the set of free states --- is obtained. Several basic corollaries of this criterion are presented. Applications to the relative entropy of entanglement in multipartite quantum systems are considered. It is shown, in particular, that local continuity of any relative entropy of multipartite entanglement follows from local continuity of the quantum mutual information.
\end{abstract}

\section{Introduction and preliminaries}

In recent years, the mathematical framework of quantum resource theories has emerged as a unifying paradigm to investigate the transformation of quantum resources~\cite{RT-review}. In this framework, quantum states can be either `free', meaning that they can be prepared relatively inexpensively with the equipment and operational constraints at hand, or `resourceful', meaning that they are difficult to obtain and therefore potentially exhibit some useful property that free states do not. One of the central problem in quantum resource theories is resource quantification: how to quantify the resource content of states that are not free? Among the several resource measures available, the relative entropy of resource~\cite{Vedral1997, tightuniform, achievability}, defined as the distance from the set of free states measured by the quantum relative entropy~\cite{Umegaki1962, Hiai1991}, has been identified as a key universal resource quantifier. In the paradigmatic case of entanglement theory, from whose blueprint the framework of quantum resource theories developed, this is nothing but the relative entropy of entanglement that has been the subject of much investigation~\cite{Vedral1997, Vedral1998, Donald1999, Plenio2000, MatthiasPhD, BrandaoPlenio2, Brandao2010, Piani2009, tightuniform, achievability}. More generally, the importance of the relative entropy of resource stems from the fact that its regularised version is believed to represent both the ultimate rate of conversion between different states under the constraints of the given resource theory, and the Stein exponent associated with hypothesis testing of resourceful states. This belief is encapsulated in the famous Generalised Quantum Stein's Lemma conjecture~\cite{BrandaoPlenio2, Datta-alias, Brandao-Gour, gap}.

In infinite-dimensional quantum information, many entropic quantifiers are well known to be ill-behaved and generally discontinuous on the set of all density operators~\cite{Wehrl}. Accordingly, infinite-dimensional quantum information poses significant technical challenges~\cite{HOLEVO-CHANNELS-2}. For resource theories, finding easy-to-check conditions that guarantee the continuity of resource measures in general is an important endeavor. In the case of the relative entropy of resource, we are further motivated to do so by our recent finding that this measure has surprisingly benign properties even in the infinite-dimensional case~\cite{achievability}, essentially derived from the key observation that the infimum over free states is often always achieved. When this holds, the relative entropy of resource is lower semicontinuous everywhere~\cite[Theorem~5]{achievability}. While encouraging, these findings tell only half of the story: on which sequences is the relative entropy of resource actually continuous, rather than merely lower semicontinuos?

In this work, we set out to answer this question, determining some sufficient conditions that guarantee the continuity of the relative entropy of resource on a given sequence of states. Our main results in this sense are Theorem~\ref{conv-th} and Corollary~\ref{conv-th-c-2} below. In Propositions~\ref{ER-conv-1} and~\ref{ER-conv-2} we then apply these general findings to the special case of the relative entropy of multipartite entanglement, one of the most important for applications.

\bigskip

We start by fixing some terminology. Let $\HH$ be a separable  Hilbert space,
$\B(\HH)$ the algebra of all bounded operators  on $\HH$ with the operator norm $\|\cdot\|$ and $\T(\HH)$ the
Banach space of all trace-class
operators on $\HH$ with the trace norm $\|\!\cdot\!\|_1$. Let
$\D(\mathcal{H})$ be  the set of quantum states (positive operators
in $\T(\HH)$ with unit trace)~\cite{HOLEVO-CHANNELS-2,MARK}.

Along with the norm topology on the Banach space $\T(\HH)$, we will consider the \emph{weak*-topology} induced on $\T(\HH)$ by its pre-dual space $\C(\HH)$ --
the Banach space of compact operators on $\HH$ with the operator norm. The weak*-topology can be defined as the coarsest topology making all functions of the form $\T(\HH)\ni \rho\mapsto \Tr C\rho$, $C\in \C(\HH)$, continuous  (see Section II.A. in~\cite{achievability} for more detail description of this topology).

The \emph{von Neumann entropy} of a quantum state
$\rho \in \D(\HH)$ is  defined by the formula
$S(\rho)=\Tr\eta(\rho)$, where  $\eta(x)=-x\ln x$ if $x>0$
and $\eta(0)=0$. It is a concave lower semicontinuous function on the set~$\D(\HH)$ taking values in~$[0,+\infty]$~\cite{PETZ-ENTROPY,HOLEVO-CHANNELS-2}.\smallskip

We will use the  homogeneous extension of the von Neumann entropy to the positive cone $\T_+(\HH)$ defined as
\begin{equation}\label{S-ext}
S(\rho)\doteq(\Tr\rho)S(\rho/\Tr\rho)=\Tr\eta(\rho)-\eta(\Tr\rho)
\end{equation}
for any nonzero operator $\rho$ in $\T_+(\HH)$ and equal to $0$ at the zero operator~\cite{Lindblad1974}.\smallskip

The \emph{quantum relative entropy} for two states $\rho$ and
$\sigma$ in $\D(\HH)$ is defined as
\begin{equation*}
D(\rho\shs\|\shs\sigma)=\sum_i\langle
\varphi_i|\,\rho\ln\rho-\rho\ln\sigma\,|\varphi_i\rangle,
\end{equation*}
where $\{\varphi_i\}$ is the orthonormal basis of
eigenvectors of the state $\rho$ and it is assumed that
$D(\rho\,\|\sigma)=+\infty$ if $\,\mathrm{supp}\rho\shs$ is not
contained in $\shs\mathrm{supp}\shs\sigma$~\cite{HOLEVO-CHANNELS-2,Lindblad1974}.\footnote{The support $\mathrm{supp}\rho$ of a state $\rho$ is the closed subspace spanned by the eigenvectors of $\rho$ corresponding to its positive eigenvalues.}\smallskip

We will use  Lindblad's extension of quantum relative entropy to any positive
operators $\rho$ and
$\sigma$ in $\mathfrak{T}(\mathcal{H})$ defined as
\begin{equation*}
D(\rho\shs\|\shs\sigma)=\sum_i\langle\varphi_i|\,\rho\ln\rho-\rho\ln\sigma+\sigma-\rho\,|\varphi_i\rangle,
\end{equation*}
where $\{\varphi_i\}$ is the orthonormal basis of
eigenvectors of the operator  $\rho$ and it is assumed that $\,D(0\|\shs\sigma)=\Tr\sigma\,$ and
$\,D(\rho\shs\|\sigma)=+\infty\,$ if $\,\mathrm{supp}\rho\shs$ is not
contained in $\shs\mathrm{supp}\shs\sigma$ (in particular, if $\rho\neq0$ and $\sigma=0$)
\cite{Lindblad1974}. If the extended von Neumann entropy $S(\rho)$ of $\rho$ (defined in~\eqref{S-ext}) is finite
then
\begin{equation}\label{re-exp}
D(\rho\shs\|\shs\sigma)=\Tr\rho(-\ln\sigma)-S(\rho)-\eta(\Tr\rho)+\Tr\sigma-\Tr\rho.
\end{equation}

The function $(\rho,\sigma)\mapsto D(\rho\shs\|\shs\sigma)$ is nonnegative lower semicontinuous and jointly convex on
$\T_+(\HH)\times\T_+(\HH)$. It is easy to show that
\begin{equation}\label{D-mul}
D(c\rho\shs\|\shs c\sigma)=cD(\rho\shs\|\shs \sigma),\qquad\qquad\qquad\qquad\quad\;
\end{equation}
and
\begin{equation}\label{D-c-id}
D(\rho\shs\|\shs c\sigma)=D(\rho\shs\|\shs\sigma)-\Tr\rho\ln c+(c-1)\Tr\sigma
\end{equation}
for any $\rho,\sigma\in\T_+(\HH)$ and $c\geq0$. \medskip

The \emph{quantum mutual information}  of a state $\rho$ of a $m$-partite quantum system $A_1...A_m$ is defined as
\begin{equation}\label{MI-m-def}
     I(A_1\!:\!...\!:\!A_m)_{\rho}\doteq
    D(\rho\shs\|\shs\rho_{A_{1}}\otimes...\otimes\rho_{A_{m}})=\sum_{k=1}^m S(\rho_{A_{k}})-S(\rho),
\end{equation}
where the second formula is well defined if $S(\rho)<+\infty$.  This quantity is treated as a measure of total
correlation of a state $\rho$, it is equal to zero if and only if $\rho$ is a product state~\cite{Lindblad1973}.
The function $\rho \mapsto I(A_1\!:\!...\!:\!A_m)_{\rho}$ is lower semicontinuous on $\D(\H_{A_1..A_m})$ and takes values in $[0,+\infty]$~\cite{HOLEVO-CHANNELS-2,Lindblad1973}.
\medskip

A \emph{quantum operation} $\Phi$ from a system $A$ to a system
$B$ is a completely positive trace-non-increasing linear map from
$\T(\HH_A)$ into $\T(\HH_B)$. A trace preserving quantum operation is called  \emph{quantum channel}~\cite{HOLEVO-CHANNELS-2,MARK}.   \smallskip

In analysis of infinite-dimensional quantum systems the notion of strong convergence of quantum operations is widely used~\cite{HOLEVO-CHANNELS-2}.
A sequence $\{\Phi_n\}$ of quantum operations  from $A$ to $B$ \emph{strongly converges} to a quantum operation $\Phi_0$ if
\begin{equation*}
\lim_{n\rightarrow+\infty}\Phi_n(\rho)=\Phi_0(\rho)\quad \forall \rho\in\D(\HH_A).
\end{equation*}

We will use the following simple observation. \smallskip

\begin{lemma}\label{s-c-r} \emph{If $\{\Phi_n\}$ is a sequence of quantum operations  from $A$ to $B$ strongly converging to a quantum operation $\Phi_0$ then
\begin{equation*}
\lim_{n\rightarrow+\infty}\Phi_n(\rho_n)=\Phi_0(\rho_0)
\end{equation*}
for any sequence $\{\rho_n\}\subset\T(\HH_A)$ converging to an operator $\rho_0$.}
\end{lemma}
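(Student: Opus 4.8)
The plan is to estimate the trace-norm distance $\|\Phi_n(\rho_n) - \Phi_0(\rho_0)\|_1$ by inserting the intermediate term $\Phi_n(\rho_0)$ and applying the triangle inequality:
\begin{equation*}
\|\Phi_n(\rho_n) - \Phi_0(\rho_0)\|_1 \le \|\Phi_n(\rho_n - \rho_0)\|_1 + \|\Phi_n(\rho_0) - \Phi_0(\rho_0)\|_1 .
\end{equation*}
I would then show that each of the two summands on the right-hand side tends to $0$ as $n\to+\infty$, which yields the claim.

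For the first summand I would invoke the fact that every quantum operation is a contraction with respect to the trace norm, that is, $\|\Phi_n(X)\|_1 \le \|X\|_1$ for all $X \in \T(\H_A)$, \emph{uniformly} in $n$. This gives $\|\Phi_n(\rho_n - \rho_0)\|_1 \le \|\rho_n - \rho_0\|_1 \to 0$ by the assumed trace-norm convergence of $\{\rho_n\}$ to $\rho_0$. To justify the contraction for a merely trace-non-increasing (rather than trace-preserving) completely positive map, I would complete $\Phi_n$ to a genuine quantum channel $\widetilde{\Phi}_n$ with output space $\H_B \oplus \mathbb{C}$ by appending the scalar component $X \mapsto \Tr[(\id - \sum_k K_k^\dagger K_k)\shs X]$, where $\{K_k\}$ are Kraus operators of $\Phi_n$; the operator $\id - \sum_k K_k^\dagger K_k$ is positive, so $\widetilde{\Phi}_n$ is completely positive and trace-preserving. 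Since channels are trace-norm contractions and the trace norm is additive over the orthogonal direct sum, one gets $\|\Phi_n(X)\|_1 \le \|\widetilde{\Phi}_n(X)\|_1 \le \|X\|_1$.

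For the second summand I would first extend the strong convergence $\Phi_n \to \Phi_0$ from states to arbitrary trace-class operators. Decomposing $\rho_0$ into real and imaginary parts and each of these into positive and negative parts, one writes $\rho_0$ as a finite linear combination of positive trace-class operators, each of which is a nonnegative multiple of a density operator. By the linearity of $\Phi_n$ and $\Phi_0$ together with the definition of strong convergence on $\D(\H_A)$, this gives $\Phi_n(\rho_0) \to \Phi_0(\rho_0)$, i.e.\ $\|\Phi_n(\rho_0) - \Phi_0(\rho_0)\|_1 \to 0$.

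Combining the two estimates completes the argument. The only genuinely delicate point is the \emph{uniform} trace-norm contraction of the operations; once this is secured for trace-non-increasing maps via the channel-completion argument above, the remaining steps are a routine triangle-inequality estimate together with the passage from states to general trace-class operators by linearity.
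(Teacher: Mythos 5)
Your proof is correct. The paper states this lemma as a ``simple observation'' and gives no proof at all, so there is nothing to compare against; your argument --- splitting via the triangle inequality, using the uniform trace-norm contractivity of completely positive trace-non-increasing maps (secured by completing each operation to a channel) for the term $\|\Phi_n(\rho_n-\rho_0)\|_1$, and extending strong convergence from states to all of $\T(\HH_A)$ by linearity for the term $\|\Phi_n(\rho_0)-\Phi_0(\rho_0)\|_1$ --- is exactly the standard argument that fills this gap.
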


\section{Convergence criterion for the relative entropy distance to a convex set of states}

Let $\FF$ be a convex subset of $\D(\HH)$ and
\begin{equation*}
D_{\FF}(\rho)\doteq\inf_{\sigma\in\FF}D(\rho\shs\|\shs\sigma).
\end{equation*}

Parts~A and~B of the following theorem give, respectively, a simple sufficient condition and a criterion  of local continuity of the function $D_{\FF}$. Denote by $\widetilde{\FF}$ the cone in $\T(\H)$ generated by the set $\FF$.\smallskip

\medskip
\begin{theorem}\label{conv-th} \emph{Let $\FF$ be a convex subset of $\D(\HH)$ such that the cone $\widetilde{\FF}$ is weak* closed.}\footnote{The verifiable conditions implying the validity of this requirement are presented in~\cite[Th.7]{achievability}.}
\it Let $(\rho_n)_{n\in \N}$ be a sequence of states
in $\D(\HH)$ converging to a state $\rho_0$.
\begin{enumerate}[A)]
\item [$\rm A)$] If there exists a sequence $(\omega_n)_{n\in \N}\subset\FF$ converging to a state $\omega_0\in\FF$  such that
\begin{equation}\label{re-c}
\lim_{n\to+\infty}D(\rho_n\|\shs\omega_n)=D(\rho_0\|\shs\omega_0)<+\infty
\end{equation}
then
\begin{equation}\label{rd-conv}
\lim_{n\to+\infty}D_{\FF}(\rho_n)= D_{\FF}(\rho_0)<+\infty.
\end{equation}
If the state $\rho_0$ is faithful then limit relation~\eqref{rd-conv} implies the existence of a converging sequence $(\omega_n)_{n\in \N}\subset\FF$ such that
\eqref{re-c} holds.\footnote{Necessary and sufficient conditions for the validity of \eqref{re-c} for given converging sequences $(\rho_n)_{n\in \N}$ and $(\omega_n)_{n\in \N}$ can be found in \cite{REC-cor} and in \cite{DTL} (Propositions 2 and 3, Corollary 3).}

\item [$\rm B)$] If  the state $\rho_0$ is arbitrary then relation~\eqref{rd-conv} holds  if and only if the sequence $(\rho_n)_{n\in \N}$ has the following property: for any subsequence $(\rho_{n_k})_{k\in \N}$ there exist a sub-subsequence $(\rho_{n_{k_t}})_{t\in \N}$ and a sequence $(\omega_t)_{t\in \N}\subset\FF$ converging to a state $\omega_0\in\FF$  such that
\begin{equation}\label{re-c+}
\lim_{t\to+\infty}D(\rho_{n_{k_t}}\|\shs\omega_t)=D(\rho_0\|\shs\omega_0)<+\infty.
\end{equation}
\end{enumerate}
\end{theorem}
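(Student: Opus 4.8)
The plan is to establish Part A first, using the lower semicontinuity of $D_{\FF}$ (which holds under the weak*-closedness of $\widetilde{\FF}$, by \cite[Theorem~5]{achievability}) together with the explicit converging sequence as an upper-bound witness, and then to bootstrap from Part A to the subsequence criterion of Part B via a standard ``subsequence of subsequence'' topological argument.

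For Part A, I would argue as follows. Since $\omega_n\in\FF$, by definition $D_{\FF}(\rho_n)\le D(\rho_n\|\shs\omega_n)$ for every $n$, so taking $\limsup$ and using the hypothesis~\eqref{re-c} gives
\begin{equation*}
\limsup_{n\to+\infty}D_{\FF}(\rho_n)\le\lim_{n\to+\infty}D(\rho_n\|\shs\omega_n)=D(\rho_0\|\shs\omega_0).
\end{equation*}
On the other hand, $D(\rho_0\|\shs\omega_0)$ need not equal $D_{\FF}(\rho_0)$, so to close the argument I would invoke lower semicontinuity of $D_{\FF}$ at $\rho_0$, which yields $D_{\FF}(\rho_0)\le\liminf_{n}D_{\FF}(\rho_n)$. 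Combining the two inequalities with the trivial bound $D_{\FF}(\rho_0)\le D(\rho_0\|\shs\omega_0)$ forces $\liminf$ and $\limsup$ to coincide and to equal $D_{\FF}(\rho_0)$, giving~\eqref{rd-conv}. The finiteness is inherited from the finite right-hand side of~\eqref{re-c}. For the converse claim under faithful $\rho_0$, I expect to use the fact that faithfulness guarantees $D_{\FF}(\rho_0)$ is attained at some optimal free state $\omega_0\in\FF$ (again drawing on the achievability results of~\cite{achievability}, where the weak*-closedness of $\widetilde{\FF}$ ensures the infimum is a minimum), and then the delicate point is to manufacture a companion sequence $(\omega_n)$ realizing~\eqref{re-c}; I anticipate borrowing the characterizations of convergence of relative entropy cited in the footnote (\cite{REC-cor,DTL}) to produce such a sequence from the convergence $D_{\FF}(\rho_n)\to D_{\FF}(\rho_0)$.

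For Part B, the strategy is purely the topological equivalence between convergence of a real sequence and the condition that every subsequence has a further subsequence converging to the common limit $D_{\FF}(\rho_0)$. For the ``if'' direction, suppose the stated subsequence property holds; given any subsequence $(\rho_{n_k})$, the hypothesis furnishes a sub-subsequence $(\rho_{n_{k_t}})$ and witnesses $(\omega_t)\to\omega_0$ satisfying~\eqref{re-c+}, to which I apply Part A (with the sub-subsequence in the role of the full sequence) to conclude $D_{\FF}(\rho_{n_{k_t}})\to D_{\FF}(\rho_0)$. Since every subsequence of $(D_{\FF}(\rho_n))$ thus has a sub-subsequence converging to $D_{\FF}(\rho_0)$, the whole sequence converges to $D_{\FF}(\rho_0)$, which is~\eqref{rd-conv}. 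For the ``only if'' direction, assume~\eqref{rd-conv}; then for any subsequence $(\rho_{n_k})$ the numerical sequence $D_{\FF}(\rho_{n_k})$ still converges to the finite value $D_{\FF}(\rho_0)$, and I must extract a sub-subsequence together with near-optimal free states $\omega_t\in\FF$ with $D(\rho_{n_{k_t}}\|\shs\omega_t)\to D_{\FF}(\rho_0)=D(\rho_0\|\shs\omega_0)$; here I would choose $\omega_t$ to be $\epsilon_t$-optimal (or exactly optimal, by attainability) for $\rho_{n_{k_t}}$ and then pass to a further subsequence along which $(\omega_t)$ converges.

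The main obstacle I foresee is precompactness: in the ``only if'' part of Part B (and symmetrically in the converse of Part A), the near-optimal free states $\omega_t$ need not a priori converge, so I must guarantee that they lie in a weak*-compact set and that the weak* limit is again a genuine state in $\FF$ rather than a subnormalized operator escaping mass to infinity. Controlling this tightness is exactly where I expect the weak*-closedness of the cone $\widetilde{\FF}$ and finiteness of $D_{\FF}(\rho_0)$ to do the heavy lifting, ensuring the limiting operator is normalized and free; establishing that the relative entropy $D(\rho_0\|\shs\omega_0)$ of the limits equals the limit of $D(\rho_{n_{k_t}}\|\shs\omega_t)$ (rather than merely dominating it via lower semicontinuity) is the crux, and I anticipate leaning on the precise convergence criteria of~\cite{REC-cor,DTL} to upgrade lower semicontinuity to full convergence.
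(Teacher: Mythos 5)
Your argument for the forward direction of Part~A has a genuine gap, and it is precisely the gap that the paper's more elaborate proof is designed to close. The three inequalities you derive are (i) $\limsup_n D_{\FF}(\rho_n)\le D(\rho_0\|\shs\omega_0)$, (ii) $D_{\FF}(\rho_0)\le\liminf_n D_{\FF}(\rho_n)$ (lower semicontinuity), and (iii) $D_{\FF}(\rho_0)\le D(\rho_0\|\shs\omega_0)$. Together these only give
\begin{equation*}
D_{\FF}(\rho_0)\;\le\;\liminf_{n\to+\infty} D_{\FF}(\rho_n)\;\le\;\limsup_{n\to+\infty} D_{\FF}(\rho_n)\;\le\;D(\rho_0\|\shs\omega_0),
\end{equation*}
which does \emph{not} force the $\liminf$ and $\limsup$ to coincide with $D_{\FF}(\rho_0)$ unless $D_{\FF}(\rho_0)=D(\rho_0\|\shs\omega_0)$, i.e.\ unless $\omega_0$ happens to be an optimal free state for $\rho_0$. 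The hypothesis~\eqref{re-c} only provides \emph{some} converging sequence of free states, with no optimality; in general $D_{\FF}(\rho_0)$ can be strictly smaller than $D(\rho_0\|\shs\omega_0)$, and your chain then leaves room for $\lim_n D_{\FF}(\rho_n)$ to exceed $D_{\FF}(\rho_0)$. What is actually needed is the bound $\limsup_n D_{\FF}(\rho_n)\le D(\rho_0\|\shs\sigma)$ for \emph{every} $\sigma\in\FF$ (or at least for near-optimal $\sigma$), and this cannot be obtained by plugging $\sigma$ directly into $D(\rho_n\|\shs\cdot)$, since the relative entropy is only lower semicontinuous in its first argument. The paper's proof gets around this with a mixing trick: for $\lambda\in(0,1)$ it studies $Y_{\lambda,n}=\inf_{\sigma\in\FF}D(\rho_n\|\shs(1-\lambda)\sigma+\lambda\omega_n)$, notes that $D_{\FF}(\rho_n)=\inf_{\lambda}Y_{\lambda,n}$, and proves $\lim_n D(\rho_n\|\shs(1-\lambda)\sigma+\lambda\omega_n)=D(\rho_0\|\shs(1-\lambda)\sigma+\lambda\omega_0)$ using the operator domination $\lambda\omega_n\le(1-\lambda)\sigma+\lambda\omega_n$ together with~\eqref{re-c} and the convergence criterion of~\cite[Proposition~2]{DTL}; two swaps of $\limsup$ and $\inf$ then yield~\eqref{D-lr}. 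Your proposal contains no substitute for this step.

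The rest of your plan is structurally sound and matches the paper: the converse of Part~A via attainability of the infimum and convergence of the optimal free states to the unique optimizer for a faithful $\rho_0$; Part~B's ``if'' direction by applying Part~A along sub-subsequences and invoking the elementary subsequence principle (the paper's Lemma~\ref{t-lemma}); and Part~B's ``only if'' direction by selecting optimal free states and extracting a convergent subsequence, where --- as you correctly anticipate --- the weak* compactness machinery of~\cite{achievability} supplies both the tightness and the upgrade from lower semicontinuity to genuine convergence. But since Part~B's ``if'' direction rests on Part~A, the flaw above propagates: without the mixing argument (or an equivalent device) the proof does not go through.
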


\begin{remark}\label{new-r-1}
Theorem~\ref{conv-th}A and claim~a) of Theorem~5 in~\cite{achievability}
show that the validity of~\eqref{re-c} for a converging sequence $(\omega_n)_{n\geq0}$ of \emph{arbitrary} free states implies the validity of~\eqref{re-c} for any sequence $(\omega_n)_{n\geq0}$ consisting of \emph{optimal} free states, i.e. such that $D(\rho_n\|\shs\omega_n)=D_{\FF}(\rho_n)$, $n\geq 0$.  Any sequence $(\omega_n)_{n>0}$ of optimal free states converges to a unique optimal free state $\omega_0$ provided that the state $\rho_0$ is faithful (see the proof of Theorem~\ref{conv-th}A below).
\end{remark}\medskip

\begin{remark}\label{new-r-2} Without  the condition of weak* closedness of  the cone $\widetilde{\FF}$ the proof below allows us to show
that~\eqref{re-c} implies that
\begin{equation*}
\limsup_{n\to+\infty}D_{\FF}(\rho_n)\leq D_{\FF}(\rho_0).
\end{equation*}
\end{remark}\medskip

\begin{proof}
We prove the claims one by one.
\begin{enumerate}[A)]
\item Assume that  relation~\eqref{re-c} holds for a sequence $(\omega_n)_n\subset\FF$ converging to a state $\omega_0\in\FF$. Since $D(\rho_0\|\shs\omega_0)<+\infty$
implies $D_{\FF}(\rho_0)<+\infty$, to prove~\eqref{rd-conv} it suffices,  by claim b) of Theorem 5 in~\cite{achievability}, to show that
\begin{equation}\label{D-lr}
\limsup_{n\to+\infty}D_{\FF}(\rho_n)\leq D_{\FF}(\rho_0).
\end{equation}

For given $\lambda\in(0,1)$ and $n\geq0$ consider the quantity
$$
Y_{\lambda,n}=\inf_{\sigma\in\FF}D(\rho_n\shs\|\shs (1-\lambda)\sigma+\lambda\omega_n).
$$
The convexity of $\FF$ implies that $Y_{\lambda,n}\geq D_{\FF}(\rho_n)$, because $(1-\lambda)\sigma+\lambda\omega_n\in \FF$ for all $n$. On the other hand, by the joint convexity of the relative entropy we see that 
$$
Y_{\lambda,n}\leq(1-\lambda)\inf_{\sigma\in\FF}D(\rho_n\shs\|\shs\sigma)+\lambda D(\rho_n\shs\|\,\omega_n)=(1-\lambda)D_{\FF}(\rho_n)+\lambda D(\rho_n\shs\|\,\omega_n).
$$
Since up to ignoring a finite number of indices we may assume that $D(\rho_n\|\shs\omega_n)$, which converges to $D(\rho_0\|\shs\omega_0)<+\infty$ as $n\to+\infty$, is finite for all $n$, we have that
\begin{equation} \label{inf-lambda}
D_{\FF}(\rho_n)=\inf_{\lambda\in (0,1)}Y_{\lambda,n}\quad \forall n\geq0.
\end{equation}
Thus, by swapping the infimum and the $\limsup$ using the elementary relation
\begin{equation} \label{elementary-swap}
\begin{aligned}
\limsup_{n\to\infty} \inf_\lambda a_{\lambda,n} &= \inf_{N} \sup_{n\geq N} \inf_\lambda a_{\lambda,n} \\
&\leq \inf_{N} \inf_\lambda \sup_{n\geq N} a_{\lambda,n} \\
&= \inf_\lambda \inf_{N} \sup_{n\geq N} a_{\lambda,n} \\
&= \inf_{\lambda} \limsup_{n\to+\infty} a_{\lambda,n} ,
\end{aligned}
\end{equation}
valid for any sequence $(a_{\lambda,n})_n$ of functions of $\lambda$, we obtain that
\begin{equation} \label{D-opt-swap}
\limsup_{n\to+\infty}D_{\FF}(\rho_n) = \limsup_{n\to+\infty} \inf_{\lambda\in (0,1)} Y_{\lambda,n} \leq \inf_{\lambda\in (0,1)} \limsup_{n\to+\infty} Y_{\lambda, n} .
\end{equation}
Now, if we could prove that
\begin{equation} \label{D-lr+}
\limsup_{n\to+\infty}Y_{\lambda,n}\leq Y_{\lambda,0}\quad \forall\lambda\in(0,1).
\end{equation}
then~\eqref{D-lr} would follow immediately from~\eqref{D-opt-swap} and~\eqref{inf-lambda}. Eq.~\eqref{D-lr+} can be proved by showing that
\begin{equation} \label{D-final}
\lim_{n\to+\infty}D(\rho_n\shs\|\shs(1-\lambda)\sigma+\lambda\omega_n)=D(\rho_0\shs\|\shs(1-\lambda)\sigma+\lambda\omega_0)<+\infty
\end{equation}
for any state $\sigma$ in $\FF$. Indeed, once~\eqref{D-final} has been established we can write
\begin{equation*} \begin{aligned}
\limsup_{n\to+\infty} Y_{\lambda,n} &= \limsup_{n\to+\infty} \inf_{\sigma\in\FF} D(\rho_n\shs\|\shs (1-\lambda)\sigma+\lambda\omega_n) \\
&\leq \inf_{\sigma\in \FF} \limsup_{n\to+\infty} D(\rho_n\shs\|\shs (1-\lambda)\sigma+\lambda\omega_n) \\
&= \inf_{\sigma\in \FF} D(\rho_0\shs\|\shs (1-\lambda)\sigma+\lambda\omega_0) \\
&= Y_{\lambda, 0} ,
\end{aligned} \end{equation*}
where the inequality follows once again from~\eqref{elementary-swap}. Since $\lambda\omega_n\leq(1-\lambda)\sigma+\lambda\omega_n$, 
Eq.~\eqref{D-final} follows from~\eqref{re-c} by~\cite[Proposition~2]{DTL}.

Assume that $\rho_0$ is a faithful state. By~\cite[Theorem~5, claim~a)]{achievability} for each $n$ there is a state $\omega_n$ in $\FF$ such that $D_{\FF}(\rho_{n})=D(\rho_n\shs\|\shs\omega_n)$. By~\cite[Proposition~18]{achievability} relation~\eqref{rd-conv} implies that the sequence $(\omega_n)_{n\in \N}$ converges to the unique state $\omega_0$ in $\FF$ such that $D_{\FF}(\rho_{0})=D(\rho_0\shs\|\shs\omega_0)$. This shows the validity of~\eqref{re-c} for the sequence $(\omega_n)_{n\in \N}$ and the state $\omega_0$.\smallskip

\item To prove the ``only if'' part of claim~B  assume that  relation~\eqref{rd-conv} holds and $(\rho_{n_k})_{k\in \N}$ is an arbitrary subsequence of the sequence $(\rho_n)_{n\in \N}$. By~\cite[Theorem~5, claim~a)]{achievability} for each $k$ there is a state $\omega_k$ in $\FF$ such that $D_{\FF}(\rho_{n_k})=D(\rho_{n_k}\|\shs\omega_k)$. By~\cite[Remark~19]{achievability} there is a subsequence $(\omega_{k_l})_{l\in \N}$ converging to a state $\omega_0\in\FF$ such that
$$
\lim_{l\to+\infty}D(\rho_{n_{k_l}}\|\shs\omega_{k_l})=D(\rho_0\|\shs\omega_0)=D_{\FF}(\rho_0)<+\infty.
$$
The ``if'' part of claim~B can be derived from claim~A by using Lemma~\ref{t-lemma} below. 
\end{enumerate}
\end{proof}

\medskip
\begin{lemma}\label{t-lemma} \emph{A sequence $(x_n)_{n\in \N}\subset \mathbb{R}$ converges to $x_0\in\mathbb{R}$ if and only if for any subsequence $(x_{n_k})_{k\in \N}$ there exist a sub-subsequence $(x_{n_{k_l}})_{l\in \N}$ converging to $x_0$.}
\end{lemma}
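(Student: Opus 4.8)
The plan is to prove the two implications separately, noting that one is essentially trivial while the other requires a standard contrapositive argument. For the ``only if'' direction, I would simply observe that if $(x_n)_{n\in\N}$ converges to $x_0$, then every subsequence $(x_{n_k})_{k\in\N}$ also converges to $x_0$; hence for any such subsequence the sub-subsequence obtained by keeping all indices (i.e.\ setting $k_l=l$) converges to $x_0$, which is exactly what the right-hand property demands.

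The substantive content lies in the ``if'' direction, which I would establish by contraposition. Suppose $(x_n)_{n\in\N}$ does \emph{not} converge to $x_0$. Unpacking the negation of the $\varepsilon$--$N$ definition of convergence yields some $\varepsilon>0$ with the property that for every $N\in\N$ there exists an index $n>N$ with $|x_n-x_0|\geq\varepsilon$. Applying this repeatedly, I would inductively extract a strictly increasing sequence of indices $n_1<n_2<\dots$ such that $|x_{n_k}-x_0|\geq\varepsilon$ for all $k$, thereby obtaining a subsequence $(x_{n_k})_{k\in\N}$ all of whose terms lie at distance at least $\varepsilon$ from $x_0$.

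The key observation is then that this particular subsequence witnesses the failure of the right-hand property: any sub-subsequence $(x_{n_{k_l}})_{l\in\N}$ still satisfies $|x_{n_{k_l}}-x_0|\geq\varepsilon$ for every $l$, and therefore cannot converge to $x_0$. Hence there is a subsequence admitting \emph{no} sub-subsequence converging to $x_0$, contradicting the hypothesis, and the contraposition is complete. I do not anticipate any real obstacle here: the only mildly delicate point is correctly negating the definition of convergence and carrying out the inductive extraction of the indices $n_k$, both of which are entirely routine.
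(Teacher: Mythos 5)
Your proof is correct and is the standard argument for this elementary fact; the paper itself states Lemma~\ref{t-lemma} without proof, treating it as routine. Both directions are handled properly: the ``only if'' part via the fact that every subsequence of a convergent sequence converges to the same limit, and the ``if'' part by contraposition, extracting a subsequence bounded away from $x_0$ by some $\varepsilon>0$, which then admits no sub-subsequence converging to $x_0$.
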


\medskip
Theorem~\ref{conv-th}A implies the following.

\medskip
\begin{corollary}\label{conv-th-c-1} \emph{Let $\FF$ be a convex subset of $\D(\HH)$ such that the cone $\widetilde{\FF}$ is weak* closed.
Let $(\rho_n)_{n\in \N}$ be a sequence of states
in $\D(\HH)$ converging to a state $\rho_0$. If there exists a sequence $(\omega_n)_{n\in \N}\subset\FF$ converging to a state $\omega_0\in\FF$  such that
\begin{equation}\label{re-c++}
\lim_{n\to+\infty}\Tr \rho_n(-\ln\omega_n)=\Tr \rho_0(-\ln\omega_0)<+\infty
\end{equation}
then}
\begin{equation*}
\lim_{n\to+\infty}D_{\FF}(\rho_n)= D_{\FF}(\rho_0)<+\infty.
\end{equation*}
\end{corollary}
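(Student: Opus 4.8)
The plan is to deduce condition~\eqref{re-c} of Theorem~\ref{conv-th}A from the hypothesis~\eqref{re-c++} and then invoke that theorem. The bridge between the two conditions is the representation~\eqref{re-exp} of the relative entropy: since $\rho_n$ and $\omega_n$ are states, we have $\Tr\rho_n=\Tr\omega_n=1$ and $\eta(\Tr\rho_n)=0$, so \eqref{re-exp} collapses to $D(\rho_n\shs\|\shs\omega_n)=\Tr\rho_n(-\ln\omega_n)-S(\rho_n)$, and likewise for $n=0$. This identity is exactly what converts the hypothesis on the cross term $\Tr\rho_n(-\ln\omega_n)$ into control on the relative entropy.

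First I would record the relevant finiteness facts. Discarding finitely many indices, we may assume $\Tr\rho_n(-\ln\omega_n)<+\infty$ for every $n\geq 0$. The nonnegativity of the relative entropy (equivalently, the inequality $S(\rho)\leq\Tr\rho(-\ln\sigma)$ for states) then forces $S(\rho_n)<+\infty$, so the representation~\eqref{re-exp} is legitimate and $D(\rho_n\shs\|\shs\omega_n)=\Tr\rho_n(-\ln\omega_n)-S(\rho_n)$ for all $n\geq0$; in particular $D(\rho_0\shs\|\shs\omega_0)<+\infty$. Since $S(\rho_n)=\Tr\rho_n(-\ln\omega_n)-D(\rho_n\shs\|\shs\omega_n)\leq\Tr\rho_n(-\ln\omega_n)$, the entropies $S(\rho_n)$ are moreover eventually uniformly bounded. (When $\omega_0$ is not faithful, the assumption $\Tr\rho_0(-\ln\omega_0)<+\infty$ already encodes the support inclusion $\supp\rho_0\subseteq\supp\omega_0$.)

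Then I would squeeze $D(\rho_n\shs\|\shs\omega_n)$ between two matching bounds. For the lower bound, the joint lower semicontinuity of the relative entropy on $\T_+(\HH)\times\T_+(\HH)$, together with $\rho_n\to\rho_0$ and $\omega_n\to\omega_0$, gives $\liminf_{n}D(\rho_n\shs\|\shs\omega_n)\geq D(\rho_0\shs\|\shs\omega_0)$. For the upper bound, I would combine the identity above, the hypothesis~\eqref{re-c++}, and the lower semicontinuity of the von Neumann entropy: because $\Tr\rho_n(-\ln\omega_n)$ converges, the $\limsup$ of the difference equals the limit of the first term minus the $\liminf$ of the second, so
$$\limsup_{n\to+\infty}D(\rho_n\shs\|\shs\omega_n)=\Tr\rho_0(-\ln\omega_0)-\liminf_{n\to+\infty}S(\rho_n)\leq\Tr\rho_0(-\ln\omega_0)-S(\rho_0)=D(\rho_0\shs\|\shs\omega_0).$$
The two bounds yield $\lim_{n}D(\rho_n\shs\|\shs\omega_n)=D(\rho_0\shs\|\shs\omega_0)<+\infty$, which is precisely~\eqref{re-c}. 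Applying Theorem~\ref{conv-th}A to the sequences $(\rho_n)_{n\in\N}$ and $(\omega_n)_{n\in\N}$ then delivers~\eqref{rd-conv}.

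The only genuinely delicate point is the passage from finiteness of the cross term $\Tr\rho_n(-\ln\omega_n)$ to finiteness of $S(\rho_n)$ and the consequent validity of~\eqref{re-exp}; everything else is routine bookkeeping with $\limsup$/$\liminf$ and the two semicontinuity properties already recorded in the preliminaries. I therefore expect the main obstacle to be stating this finiteness step carefully, together with the non-faithful case for $\omega_0$, rather than anything in the limiting argument itself.
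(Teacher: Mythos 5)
Your proof is correct and follows essentially the same route as the paper's: both reduce~\eqref{re-c++} to~\eqref{re-c} via the identity $\Tr\rho_n(-\ln\omega_n)=D(\rho_n\shs\|\shs\omega_n)+S(\rho_n)$ from~\eqref{re-exp} together with the lower semicontinuity of $S$ and $D$, and then invoke Theorem~\ref{conv-th}A. You merely spell out the squeeze argument and the finiteness bookkeeping that the paper leaves implicit.
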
 

\begin{proof}
It suffices to note that condition~\eqref{re-c++} implies relation
\eqref{re-c}. This follows from the lower semicontinuity of the entropy and the relative entropy, since representation~\eqref{re-exp} shows that
$$
\Tr \rho_n(-\ln\omega_n)=D(\rho_n\|\shs\omega_n)+S(\rho_n)\quad \forall n\geq0.\;\Box
$$

To formulate basic corollaries of Theorem~\ref{conv-th}B introduce the following notation. For a given subset $\FF$ of $\D(\HH)$
denote by $\Upsilon_{\FF}$ the set of all converging sequences $(\rho_n)_{n\in\N}$ such that
$$
\lim_{n\to+\infty}D_{\FF}(\rho_n)=D_{\FF}(\rho_0)<+\infty,
$$
where $\rho_0$ is the limit of $(\rho_n)_{n\in\N}$.\smallskip

In the following corollary we assume that $\FF$ (correspondingly, $\FF_X$, $X=1,2,A,B,AB$) are convex subsets of $\D(\HH)$ (correspondingly, of $\D(\HH_X)$) such that the cones $\widetilde{\FF}$ (correspondingly, $\widetilde{\FF}_X$, $X=1,2,A,B,AB$) generated by these subsets are weak* closed. Speaking about converging
sequences $(\rho_n)_{n\geq0}$ (correspondingly, $(\sigma_n)_{n\geq0}$, $(\omega_n)_{n\geq0}$) we always assume that they converge to the states $\rho_0$
(correspondingly, $\sigma_0$, $\omega_0$).
\end{proof}

\medskip
\begin{corollary}\label{conv-th-c-2} 
\emph{The following holds:}
\begin{enumerate}[A)]
\item \emph{If $(\rho_n)_{n\geq0}\subset\D(\H)$ is a converging sequence such that $\,c\rho_n\leq \sigma$ holds for all $n\geq 0$ and some $c>0$, where $\sigma$ is a state such that $D_{\FF}(\sigma)<+\infty$, then $(\rho_n)_{n\geq0}\in \Upsilon_{\FF}$.} 

\item \emph{If $(\rho_n)_{n\geq0}\subset\D(\H)$ is a converging sequence such that $c\rho_n\leq \sigma_n$ for all $n\geq 0$ and some $c>0$, where $(\sigma_n)_{n\geq0}$ is a sequence in $\Upsilon_{\FF}$, then $(\rho_n)_{n\geq0}\in \Upsilon_{\FF}$.} 

\item \emph{If $(\rho_n)_{n\geq0}\subset\D(\H)$ and $(\sigma_n)_{n\geq0}\subset\D(\H)$ are converging sequences of states and $(p_n)_{n\geq0}$ is a sequence of numbers in $[0,1]$ converging to $p_0\in[0,1]$, then
\begin{equation}\label{t-eqv}
\{(\rho_n)_{n\geq0}\in \Upsilon_{\FF}\}\wedge\{(\sigma_n)_{n\geq0}\in \Upsilon_{\FF}\}\quad\Longrightarrow\quad \left\{(p_n\rho_n+(1-p_n)\sigma_n)_{n\geq0}\in \Upsilon_{\FF}\right\}.
\end{equation}
If $p_0\in(0,1)$ then ``$\Longleftrightarrow$'' holds in~\eqref{t-eqv}.} 

\item \emph{If  $\,\FF_A$ and $\FF_B$ are subsets of $\,\D(\H_A)$ and $\D(\H_B)$ correspondingly then
$$
\{(\rho_n)_{n\geq0}\in \Upsilon_{\FF_A}\}\quad\Longrightarrow\quad \left\{(\Phi(\rho_n))_{n\geq0}\in \Upsilon_{\FF_B}\right\}
$$
for any quantum channel $\,\Phi:A\mapsto B$ such that $\,\Phi(\FF_A)\subseteq\FF_B$.} 

\item \emph{If  $\,\FF_A$ and $\FF_{B}$ are subsets of $\D(\H_A)$ and $\D(\H_B)$, 
respectively, then
$$
\{(\rho_n)_{n\geq0}\in \Upsilon_{\FF_A}\}\quad\Longrightarrow\quad \left\{([\Phi_n(\rho_n)])_{n\geq0}\in \Upsilon_{\FF_B}\right\}
$$
for any sequence of quantum operations $\,\Phi_n:A\mapsto B$ strongly converging to a quantum operation $\,\Phi_0$ such that $\,\Phi_n(\FF_A)\subseteq\widetilde{\FF}_B$
and $\,c_n\doteq\Phi_n(\rho_n)\neq0$ for all $n\geq0$, where $[\Phi_n(\rho_n)]$ denotes the state $c_n^{-1}\Phi_n(\rho_n)$.} 

\item \emph{If  $\,\FF_1\subseteq\FF_2$ then $\{(\rho_n)_{n\geq0}\in \Upsilon_{\FF_1}\}\;\Longrightarrow\;\{(\rho_n)_{n\geq0}\in \Upsilon_{\FF_2}\}$.}\smallskip

\item \emph{Let  $\,\FF_A$, $\FF_B$ and $\FF_{AB}$ be  subsets of $\,\D(\H_A)$, $\D(\H_B)$ and $\D(\H_{AB})$ correspondingly.} 

\emph{If $\,\omega_A\otimes\omega_B\in\FF_{AB}$ for any $\,\omega_A\in\FF_{A}$ and $\,\omega_B\in\FF_{B}$ then the implication
$$
\{(\rho^A_n)_{n\geq0}\in \Upsilon_{\FF_A}\}\wedge\{(\rho^B_n)_{n\geq0}\in \Upsilon_{\FF_B}\}\quad\Longrightarrow\quad \left\{(\rho^{AB}_n)_{n\geq0}\in \Upsilon_{\FF_{AB}}\right\}
$$
holds for any converging sequence $(\rho^{AB}_n)_{n\geq0}\subset\D(\H_{AB})$ such that\footnote{$I(A\!:\!B)$ denotes the quantum mutual information defined in~\eqref{MI-m-def}.}
\begin{equation}\label{QMI-cont}
\lim_{n\to+\infty}I(A\!:\!B)_{\rho_n}=I(A\!:\!B)_{\rho_0}<+\infty.
\end{equation}}

\emph{If $\,\omega_X\in\FF_{X}$ for any $\,\omega_{AB}\in\FF_{AB}$, where $X$ is either $A$ or $B$, then the implication
$$
\left\{(\rho^{AB}_n)_{n\geq0}\in \Upsilon_{\FF_{AB}}\right\}\quad\Longrightarrow\quad\{(\rho^X_n)_{n\geq0}\in \Upsilon_{\FF_X}\}
$$
holds for arbitrary converging sequence $(\rho^{AB}_n)_{n\geq0}\subset\D(\H_{AB})$.}
\end{enumerate}
\end{corollary}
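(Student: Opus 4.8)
The plan is to derive every item A)--G) from the convergence criterion of Theorem~\ref{conv-th}, principally from its subsequence form~B), by exhibiting for the transformed sequence a converging sequence of free states along which the relative entropy is \emph{genuinely} continuous rather than merely lower semicontinuous. I think of membership $(\rho_n)_{n\geq0}\in\Upsilon_{\FF}$ as certified by such a \emph{witness}: a sequence $(\omega_n)_{n\geq0}\subset\FF$ with $\omega_n\to\omega_0\in\FF$ and $\lim_n D(\rho_n\shs\|\shs\omega_n)=D(\rho_0\shs\|\shs\omega_0)<+\infty$, whose existence along sub-subsequences is exactly what Theorem~\ref{conv-th}B produces and consumes. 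For each part the strategy is to take a witness for the source sequence and manufacture one for the target sequence; the existence and convergence of optimal free states from~\cite[Th.~5, Prop.~18, Rem.~19]{achievability} supply the input witnesses, and the fine two-variable continuity criteria of~\cite{DTL,REC-cor} certify the continuity of the manufactured ones.

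Several items are immediate reductions. Part~F) is the simplest: if $\FF_1\subseteq\FF_2$, any witness $(\omega_n)\subset\FF_1$ for $(\rho_n)\in\Upsilon_{\FF_1}$ lies also in $\FF_2$ and leaves the relative entropies $D(\rho_n\shs\|\shs\omega_n)$ unchanged, so the same sequence certifies $(\rho_n)\in\Upsilon_{\FF_2}$ through Theorem~\ref{conv-th}B. Part~A) is the case $\sigma_n\equiv\sigma$ of part~B), the constant sequence lying in $\Upsilon_{\FF}$ precisely because $D_{\FF}(\sigma)<+\infty$. Part~B) is in turn equivalent to the nontrivial direction (``$\Longleftarrow$'') of the equivalence in~C) for $p_0\in(0,1)$: a mixture $\theta_n=p_n\rho_n+(1-p_n)\sigma_n\in\Upsilon_{\FF}$ with $p_n\to p_0\in(0,1)$ satisfies $c\rho_n\leq\theta_n$ for some $c>0$ and all large $n$, the hypothesis of~B); conversely the domination $c\rho_n\leq\sigma_n$ of~B) lets one write $\sigma_n=c\rho_n+(1-c)\widehat\tau_n$ with $\widehat\tau_n\doteq(1-c)^{-1}(\sigma_n-c\rho_n)$ a converging sequence of states, turning~B) into the same statement read through~C). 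Finally part~D) is the case $\Phi_n\equiv\Phi$, $c_n\equiv1$ of part~E), and the second implication of~G) is the case $\Phi=\Tr_{\overline{X}}$ of part~D), since $\omega_X\in\FF_X$ for $\omega_{AB}\in\FF_{AB}$ says exactly that the partial trace sends $\FF_{AB}$ into $\FF_X$.

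This leaves the analytic cores --- the domination~B), the forward implication of the mixing statement~C), and the operation statement~E). For~B) I would fix a subsequence, apply Theorem~\ref{conv-th}B to $(\sigma_n)\in\Upsilon_{\FF}$ to obtain along a sub-subsequence free states $\omega_t\to\omega_0\in\FF$ with $D(\sigma_{m_t}\shs\|\shs\omega_t)\to D(\sigma_0\shs\|\shs\omega_0)<+\infty$, and reuse $(\omega_t)$ as a witness for $(\rho_{m_t})$: as $\omega_t$ is a state one has $-\ln\omega_t\geq0$, so $c\rho_{m_t}\leq\sigma_{m_t}$ gives $\Tr\rho_{m_t}(-\ln\omega_t)\leq c^{-1}\Tr\sigma_{m_t}(-\ln\omega_t)$, the domination needed to invoke the criteria of~\cite{DTL,REC-cor}. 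For the forward implication of~C) I would extract, along a common sub-subsequence, witnesses $\alpha_n\to\alpha_0\in\FF$ and $\beta_n\to\beta_0\in\FF$ for $(\rho_n)$ and $(\sigma_n)$ and use the convex combination $\gamma_n=p_n\alpha_n+(1-p_n)\beta_n\in\FF$, which converges to $\gamma_0\in\FF$. For~E), starting from a witness $(\omega_n)$ for $(\rho_n)$, I would take the normalised images $[\Phi_n(\omega_n)]\in\FF_B$ as candidate witness; Lemma~\ref{s-c-r} yields $\Phi_n(\rho_n)\to\Phi_0(\rho_0)$ and $\Phi_n(\omega_n)\to\Phi_0(\omega_0)$, hence convergence of the normalising traces and states, while the scaling identities~\eqref{D-mul}--\eqref{D-c-id} reduce matters to the behaviour of $D(\Phi_n(\rho_n)\shs\|\shs\Phi_n(\omega_n))$ under the operations.

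I expect the main obstacle to be uniform across~B), C) and~E): operator domination (in~B)), joint convexity (in~C)), and monotonicity under operations (in~E)) each supply only the easy inequalities, namely an upper bound on the $\limsup$ of the relevant relative entropy and, via lower semicontinuity, a lower bound on its $\liminf$, leaving a potential gap with the value $D(\,\cdot\,\shs\|\,\cdot\,)$ at the limit point. Closing this gap --- upgrading lower semicontinuity to genuine two-variable continuity along the manufactured witness --- is the heart of the matter and is exactly what the criteria cited in~\cite{DTL,REC-cor} are designed to deliver; for~C) the gap can be organised away by embedding the two branches against an orthogonal classical flag, where the relative entropy is exactly additive, and then discarding the flag as an instance of~E). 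No such gap appears in the first implication of~G): the chain identity $D(\rho^{AB}\shs\|\shs\alpha\otimes\beta)=I(A\!:\!B)_{\rho^{AB}}+D(\rho^A\shs\|\shs\alpha)+D(\rho^B\shs\|\shs\beta)$ writes the joint relative entropy against a product free state $\alpha\otimes\beta\in\FF_{AB}$ as a sum of three genuinely continuous terms --- the mutual information, continuous by the hypothesis~\eqref{QMI-cont}, and the two marginal relative entropies, continuous because $\alpha_n,\beta_n$ are witnesses for the marginals --- so the product sequence $\alpha_n\otimes\beta_n\to\alpha_0\otimes\beta_0\in\FF_{AB}$ satisfies~\eqref{re-c} outright and Theorem~\ref{conv-th}A applies.
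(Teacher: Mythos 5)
Your proposal is correct and follows essentially the same route as the paper's proof: the same reductions (A from B, D from E, the second claim of G from D), the same use of the subsequence--witness criterion of Theorem~\ref{conv-th}B combined with the convergence-transfer results of~\cite{DTL} for parts~B and~C and of~\cite{REC-cor} together with the scaling identities~\eqref{D-mul}--\eqref{D-c-id} for part~E, and the same chain identity (the paper's Lemma~\ref{D-AB}) with product witnesses for the first claim of~G. Your minor variations --- the observation that B is equivalent to the backward implication of~C, the witness $p_n\alpha_n+(1-p_n)\beta_n$ in place of the paper's more robust fixed mixture $\frac12(\omega^\rho_t+\omega^\sigma_s)$, and the optional classical-flag aside --- do not change the substance of the argument.
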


\begin{proof}
We prove the above claims one at a time.
\begin{enumerate}[A)]
\item Follows immediately from 
claim B proved below. 
\item Let $(\rho_{n_k})_{k\geq0}$ be an arbitrary subsequence of  $(\rho_n)_{n\geq0}$.  Since $(\sigma_n)_{n\geq0}\in\Upsilon_{\FF}$, Theorem~\ref{conv-th}B guarantees the existence of a subsequence $(\sigma_{n_{k_t}})_{t\geq0}$ and a sequence $(\omega_t)_{t>0}\subset\FF$ converging to a state $\omega_0\in\FF$  such that
$$
\lim_{t\to+\infty}D(\sigma_{n_{k_t}}\|\shs\omega_t)=D(\sigma_0\|\shs\omega_0)<+\infty.
$$
The condition $c\rho_n\leq \sigma_n$ implies, by Proposition~2 and Remark~4 in~\cite{DTL}, the validity of relation~\eqref{re-c+}. So, $(\rho_n)_{n\geq0}\in\Upsilon_{\FF}$
by Theorem~\ref{conv-th}B. 

\item Assume that $(\rho_n)_{n\geq0},(\sigma_n)_{n\geq0}\in \Upsilon_{\FF}$. Let $\tau_n=p_n\rho_n+(1-p_n)\sigma_n$ for all $n\geq 0$ and
$(\tau_{n_k})_{k\geq0}$ be an arbitrary subsequence of  $(\tau_n)_{n\geq0}$. Since $(\rho_n)_{n\geq0}\in\Upsilon_{\FF}$, Theorem~\ref{conv-th}B guarantees the existence of a subsequence $(\rho_{n_{k_t}})_{t\geq0}$ and a sequence $(\omega^{\rho}_t)_{t>0}\subset\FF$ converging to a state $\omega^{\rho}_0\in\FF$  such that
\begin{equation}\label{rho-lr}
\lim_{t\to+\infty}D(\rho_{n_{k_t}}\|\shs\omega^{\rho}_t)=D(\rho_0\|\shs\omega^{\rho}_0)<+\infty.
\end{equation}
Since $(\sigma_n)_{n\geq0}\in\Upsilon_{\FF}$, Theorem~\ref{conv-th}B guarantees the existence of a subsequence $(\sigma_{n_{k_{t_s}}})_{s\geq0}$ and a sequence $(\omega^{\sigma}_s)_{s>0}\subset\FF$ converging to a state $\omega^{\sigma}_0\in\FF$  such that
\begin{equation}\label{sigma-lr}
\lim_{s\to+\infty}D(\sigma_{n_{k_{t_s}}}\|\shs\omega^{\sigma}_s)=D(\sigma_0\|\shs\omega^{\sigma}_0)<+\infty.
\end{equation}
Limit relations~\eqref{rho-lr} and~\eqref{sigma-lr} imply, by Propositions~2 and~3 in~\cite{DTL}, that
$$
\lim_{s\to+\infty}D(\tau_{n_{k_{t_s}}}\|\shs\textstyle\frac{1}{2}(\omega^{\rho}_{t_s}+\omega^{\sigma}_s))=D(\tau_0\|\shs\frac{1}{2}(\omega^{\rho}_0+\omega^{\sigma}_0))<+\infty.
$$
Since  $(\frac{1}{2}(\omega^{\rho}_{t_s}+\omega^{\sigma}_s))_{s\geq0}\subset\FF$ by the convexity of $\FF$, $(\tau_n)_{n\geq0}\in\Upsilon_{\FF}$ by Theorem~\ref{conv-th}B. 

The last claim of C follows from part~B of the corollary. 

\item 
Follows immediately from claim~E proved below. 

\item Denote the state $[\Phi_n(\rho_n)]$ by $\sigma_n$ for all $n\geq0$. By Lemma~\ref{s-c-r} in Section 1 the sequence $(\sigma_{n})_{n>0}$
converges to the state $\sigma_0$. Let $(\sigma_{n_k})_{k\geq0}$ be an arbitrary subsequence of  $(\sigma_n)_{n\geq0}$. Since $(\rho_n)_{n\geq0}\in\Upsilon_{\FF_A}$, Theorem~\ref{conv-th}B guarantees the existence of a subsequence $(\rho_{n_{k_t}})_{t\geq0}$ and a sequence $(\omega_t)_{t>0}\subset\FF_A$ converging to a state $\omega_0\in\FF_A$  such that
$$
\lim_{t\to+\infty}D(\rho_{n_{k_t}}\|\shs\omega_t)=D(\rho_0\|\shs\omega_0)<+\infty.
$$
By Corollary 2 in~\cite{REC-cor} we have
\begin{equation}\label{a-l-r}
\lim_{t\to+\infty}D(\Phi_{n_{k_t}}(\rho_{n_{k_t}})\|\shs\Phi_{n_{k_t}}(\omega_t))=D(\Phi_{0}(\rho_0)\|\shs\Phi_{0}(\omega_0))<+\infty.
\end{equation}
Since $\Phi_n(\rho_n)\neq0$ for all $n\geq0$, we may assume that $\,\Phi_{n_{k_t}}(\omega_t)\neq0\,$ for all $t$ and that $\Phi_{0}(\omega_0)\neq0$. Let $\tilde{\omega}_t$  and $\tilde{\omega}_0$ be the states proportional to the operators $\Phi_{n_{k_t}}(\omega_t)$ and
$\Phi_{0}(\omega_0)$. By Lemma~\ref{s-c-r} in Section 1 the sequence $(\tilde{\omega}_t)_{t>0}\subset\FF_B$ converges to the state $\tilde{\omega}_0\in\FF_B$. By using identities~\eqref{D-mul} and~\eqref{D-c-id} it is easy to show that limit relation~\eqref{a-l-r} implies that
$$
\lim_{t\to+\infty}D(\sigma_{n_{k_t}}\|\shs\tilde{\omega}_t)=D(\sigma_{0}\|\shs\tilde{\omega}_0)<+\infty.
$$
Thus, $(\sigma_n)_{n\geq0}\in\Upsilon_{\FF_A}$  by  Theorem~\ref{conv-th}B.

\item 
Follows straightforwardly from Theorem~\ref{conv-th}B. 

\item To prove the first claim assume that $(\rho_n)_{n\geq0}$ is a converging sequence satisfying~\eqref{QMI-cont} such that $(\rho^A_n)_{n\geq0}\in \Upsilon_{\FF_A}$ and $(\rho^B_n)_{n\geq0}\in \Upsilon_{\FF_B}$. Let  $(\rho_{n_k})_{k\geq0}$ be an arbitrary subsequence of this sequence.  Since $(\rho^A_n)_{n\geq0}\in \Upsilon_{\FF_A}$, Theorem~\ref{conv-th}B guarantees the existence of a subsequence $(\rho^A_{n_{k_t}})_{t\geq0}$ and a sequence $(\omega^A_t)_{t>0}\subset\FF_A$ converging to a state $\omega^A_0\in\FF_A$  such that
\begin{equation}\label{rho-lr+}
\lim_{t\to+\infty}D(\rho^A_{n_{k_t}}\|\shs\omega^A_t)=D(\rho^A_0\|\shs\omega^A_0)<+\infty.
\end{equation}
Since $(\rho^B_n)_{n\geq0}\in\Upsilon_{\FF_B}$, Theorem~\ref{conv-th}B guarantees the existence of a subsequence $(\rho^B_{n_{k_{t_s}}})_{s\geq0}$ and a sequence $(\omega^B_s)_{s>0}\subset\FF_B$ converging to a state $\omega^B_0\in\FF_B$  such that
\begin{equation}\label{sigma-lr+}
\lim_{s\to+\infty}D(\rho^B_{n_{k_{t_s}}}\|\shs\omega^B_s)=D(\rho^B_0\|\shs\omega^B_0)<+\infty.
\end{equation}

By using Lemma~\ref{D-AB} below we obtain  from~\eqref{QMI-cont},~\eqref{rho-lr+} and~\eqref{sigma-lr+} that
$$
\lim_{s\to+\infty}D(\rho_{n_{k_{t_s}}}\|\shs\omega^A_{t_s}\otimes\omega^B_s)=D(\rho_{0}\|\shs\omega^A_0\otimes\omega^B_0)<+\infty.
$$
Since $(\omega^A_{t_s}\otimes\omega^B_s)_{s\geq0}\subset\FF_{AB}$ by the condition, $(\rho_n)_{n\geq0}\in\Upsilon_{\FF_{AB}}$ by Theorem~\ref{conv-th}B. 

The second claim of G directly follows from part D of the corollary proved before.
\end{enumerate}
\end{proof}

\begin{lemma}\label{D-AB} \emph{For arbitrary states $\rho\in\D(\H_{AB})$, $\omega_A\in\D(\H_A)$  and  $\omega_B\in\D(\H_B)$ the
following identity holds
\begin{equation}\label{f-iden}
D(\rho\shs\|\shs\omega_A\otimes\omega_B)=D(\rho_A\|\shs\omega_A)+D(\rho_B\|\shs\omega_B)+I(A\!:\!B)_{\rho}
\end{equation}
in which both side may take the value $+\infty$.}\smallskip
\end{lemma}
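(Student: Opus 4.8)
The plan is to prove~\eqref{f-iden} by routing through the product of the marginals $\tau\doteq\rho_A\otimes\rho_B$, rather than through the entropy representation~\eqref{re-exp}. The identity then splits into two more elementary statements: a Pythagorean-type identity
\[
D(\rho\shs\|\shs\omega_A\otimes\omega_B)=D(\rho\shs\|\shs\rho_A\otimes\rho_B)+D(\rho_A\otimes\rho_B\shs\|\shs\omega_A\otimes\omega_B),
\]
and the additivity of the relative entropy on product states, $D(\rho_A\otimes\rho_B\shs\|\shs\omega_A\otimes\omega_B)=D(\rho_A\|\shs\omega_A)+D(\rho_B\|\shs\omega_B)$. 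Since $D(\rho\shs\|\shs\rho_A\otimes\rho_B)=I(A\!:\!B)_{\rho}$ by definition, combining the two yields~\eqref{f-iden}. The advantage of this route is that it never refers to $S(\rho)$, which is essential because $S(\rho)$ may be $+\infty$ while all three terms on the right-hand side of~\eqref{f-iden} remain finite (for instance when $\rho=\rho_A\otimes\rho_B$ with $S(\rho_A)=+\infty$).

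First I would dispose of the degenerate cases by a support analysis. Taking partial traces of $(\Pi_A\otimes\Pi_B)\rho\shs(\Pi_A\otimes\Pi_B)$, where $\Pi_X\doteq\supp\omega_X$, one checks that $\supp\rho\subseteq\supp\omega_A\otimes\supp\omega_B$ holds if and only if $\supp\rho_A\subseteq\supp\omega_A$ and $\supp\rho_B\subseteq\supp\omega_B$. Consequently the left-hand side of~\eqref{f-iden} is $+\infty$ by the support convention exactly when one of the marginal relative entropies is. More generally, applying the monotonicity of the relative entropy under the partial-trace channel (and $\Tr_B(\omega_A\otimes\omega_B)=\omega_A$) gives $D(\rho\shs\|\shs\omega_A\otimes\omega_B)\geq D(\rho_A\|\shs\omega_A)$ and the analogous bound for $B$; hence as soon as a marginal relative entropy is $+\infty$ both sides of~\eqref{f-iden} are $+\infty$ and there is nothing to prove. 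I may therefore assume $D(\rho_A\|\shs\omega_A),D(\rho_B\|\shs\omega_B)<+\infty$ together with all the relevant support inclusions.

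In this regime the tensor-additivity identity is unproblematic: using $\ln(\omega_A\otimes\omega_B)=\ln\omega_A\otimes\id+\id\otimes\ln\omega_B$ on the common support splits $D(\rho_A\otimes\rho_B\shs\|\shs\omega_A\otimes\omega_B)$ into the two (now finite) marginal contributions. For the Pythagorean identity I would reduce it formally to the vanishing of the cross term $\Tr\big[(\rho-\tau)(\ln\tau-\ln(\omega_A\otimes\omega_B))\big]$; this is heuristically zero because the operator $\ln\tau-\ln(\omega_A\otimes\omega_B)$ is a sum $Z_A\otimes\id+\id\otimes Z_B$ of purely local terms, while $\rho$ and $\tau=\rho_A\otimes\rho_B$ have the same marginals, so $\Tr[(\rho-\tau)(Z_A\otimes\id)]=\Tr[(\Tr_B(\rho-\tau))Z_A]=0$ and likewise for $B$. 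When $I(A\!:\!B)_{\rho}=+\infty$ the identity is to be read as the assertion that the left-hand side of~\eqref{f-iden} is $+\infty$, which the same decomposition should deliver since the middle term $D(\tau\shs\|\shs\omega_A\otimes\omega_B)$ is finite.

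The main obstacle is making the cross-term computation rigorous: both $\ln\tau$ and $\ln(\omega_A\otimes\omega_B)$ are in general unbounded, $\rho-\tau$ is merely trace class, and the partial-trace manipulation ``$\Tr[(\rho-\tau)(Z_A\otimes\id)]=0$'' is not automatic for unbounded $Z_A$. I would resolve this by a monotone approximation, replacing $\omega_A,\omega_B$ (and, to tame $\ln\rho_A,\ln\rho_B$, also $\rho_A,\rho_B$) by their normalised compressions to increasing finite-rank spectral projections, so that all logarithms become bounded and every trace above is legitimate; I would then establish the identity in this truncated setting and pass to the limit, using monotone convergence for the terms $\Tr\rho_A(-\ln\omega_A)$ and $\Tr\rho_B(-\ln\omega_B)$ and the lower semicontinuity of the relative entropy to recover $I(A\!:\!B)_{\rho}$ and $D(\rho\shs\|\shs\omega_A\otimes\omega_B)$ in the limit. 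The comparison estimates controlling the reference operators along this truncation are exactly of the type recorded in Proposition~2 of~\cite{DTL} and Corollary~2 of~\cite{REC-cor}, which I would invoke rather than reprove. As an independent check, in the case $S(\rho)<+\infty$ the identity follows immediately by combining~\eqref{re-exp} with the second expression in~\eqref{MI-m-def} (cancelling the finite marginal entropies when they are finite, and noting that $\Tr\rho_A(-\ln\omega_A)=+\infty$ forces both sides to be $+\infty$ when $S(\rho_A)=+\infty$).
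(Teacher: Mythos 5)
Your route is genuinely different from the paper's. The paper first verifies \eqref{f-iden} via the representation \eqref{re-exp} under the assumption that $S(\rho)$, $S(\rho_A)$ and $S(\rho_B)$ are finite, and then treats a general $\rho$ by applying product channels $\Phi_n\otimes\Psi_n$ with finite-dimensional outputs and passing to the limit, using lower semicontinuity of the relative entropy together with its monotonicity under channels to show that each of the four terms converges to its true value (possibly $+\infty$). Your decomposition through $\tau=\rho_A\otimes\rho_B$ --- the Pythagorean identity plus additivity on product reference states --- avoids the entropy representation altogether, which is attractive precisely because it never creates an $\infty-\infty$ of entropies; your support analysis and the reduction, via data processing, to the case $D(\rho_A\|\shs\omega_A),D(\rho_B\|\shs\omega_B)<+\infty$ are correct.

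The rigorization step, as you describe it, has two concrete problems. First, the vanishing of the cross term $\Tr\bigl[(\rho-\tau)(Z_A\otimes\id+\id\otimes Z_B)\bigr]$ uses that $\rho$ and $\tau$ have identical marginals; if you compress $\rho_A,\rho_B,\omega_A,\omega_B$ to finite-rank spectral projections without applying the same compression $P_A\otimes P_B$ to $\rho$ itself, the truncated $\tau$ no longer shares the marginals of the (untruncated) $\rho$ and the cancellation fails. Once you compress $\rho$ as well, your approximation scheme is essentially the paper's $\Phi_n\otimes\Psi_n(\rho)$. Second, ``monotone convergence for $\Tr\rho_A(-\ln\omega_A)$'' cannot recover $D(\rho_A\|\shs\omega_A)$ in the limit: by \eqref{re-exp}, when $S(\rho_A)=+\infty$ but $D(\rho_A\|\shs\omega_A)<+\infty$ --- exactly the regime your approach is meant to cover --- the truncated values of $\Tr\rho_A(-\ln\omega_A)$ diverge while the relative entropy stays finite, so you must track the relative entropies themselves (lower semicontinuity plus data processing, as the paper does) rather than their entropy-type summands. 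Both defects are repairable, but after the repairs the argument essentially coincides with the paper's, with your Pythagorean decomposition replacing the use of \eqref{re-exp} only in the finite-dimensional core step, where either computation works equally well.
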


\begin{proof}
If $\rho$ is a state such that $S(\rho)$, $S(\rho_A)$ and $S(\rho_B)$ are finite then identity~\eqref{f-iden} is proved easily by using representation~\eqref{re-exp}.

If $\rho$ is an arbitrary state then it can be approximated by the sequence of states
$$
\rho_n=\Phi_n\otimes\Psi_n(\rho),
$$
where $\{\Phi_n\}$ and $\{\Psi_n\}$ are sequences of channels from $A$ to $A$ and from $B$ to $B$ with finite-dimensional
output strongly converging to the identity channels $\id_A$ and $\id_B$ correspondingly. Since
$S(\rho_n)$, $S(\rho_n^A)$ and $S(\rho_n^B)$ are finite for each $n$, identity~\eqref{f-iden}
holds with $\rho$, $\omega_A$ and $\omega_B$ replaced, respectively, by $\rho_n$, $\Phi_n(\omega_A)$ and $\Psi_n(\omega_B)$. Thus, to complete the proof it suffices to note that
$$
\lim_{n\to+\infty}D(\rho_n\|\shs\Phi_n(\omega_A)\otimes\Psi_n(\omega_B))=
D(\rho\shs\|\shs\omega_A\otimes\omega_B)\leq+\infty,
$$
$$
\lim_{n\to+\infty}D(\rho_n^A\|\shs\Phi_n(\omega_A))=
D(\rho_A\|\shs\omega_A)\leq+\infty,\quad \lim_{n\to+\infty}D(\rho_n^B\|\shs\Psi_n(\omega_B))=
D(\rho_B\|\shs\omega_B)\leq+\infty
$$
and
$$
\lim_{n\to+\infty}I(A\!:\!B)_{\rho_n}=I(A\!:\!B)_{\rho}\leq+\infty.
$$
These limit relations follow from the lower semicontinuity of the quantum relative entropy and the fact that it is 
non-increasing under quantum channels.
\end{proof}

\section{Application: the relative entropy of multipartite 
entanglement}

In this section we apply Theorem~\ref{conv-th} and its corollaries to obtain continuity conditions for  the relative entropy of $\pi$-entanglement of a state of $m$-partite system $A_1\ldots A_m$ corresponding to any given (non-empty) set $\pi\subseteq P(m)$ of partitions of $\{1,\ldots, m\}$. For a given $\pi=\{\pi_k\}$
we define the set of $\pi$-separable states by (cf.\cite{Szalay2015})
$$
\mathcal{S}_{A_1\ldots A_m}^{\pi}\doteq \mathrm{cl}_{\mathrm{tn}} \left( \mathrm{conv}\left( \bigcup\nolimits_k \left\{ \bigotimes\nolimits_j \psi^{(j)}_{A_{\pi_k(j)}}: |\psi^{(j)}\rangle_{A_{\pi_k(j)}}\!\in \HH_{A_{\pi_k(j)}}, \langle\psi^{(j)} | \psi^{(j)}\rangle = 1 \right\} \right) \right),
$$
where $A_{\pi_k(j)}$ is the system obtained by joining those $A_i$ such that $i\in \pi_k(j)$, $\mathrm{cl}_{\mathrm{tn}}$ denotes the closure w.r.t. the trace norm and we used the shorthand notation $\psi^{(j)}\doteq |\psi^{(j)}\rangle\langle\psi^{(j)}|$ (see details in~\cite[Section V-B]{achievability}).

Let $\rho$ be a state of a $m$-partite quantum system $A_1\ldots A_m$. For a generic non-empty $\pi\subseteq P(m)$, the relative entropy of $\pi$-entanglement
of $\rho$ is defined by
$$
E_{R,\pi}(\rho) =\inf_{\sigma\in\mathcal{S}_{A_1\ldots A_m}^\pi}D(\rho\shs\|\shs\sigma).
\label{m-E-r}
$$

If the set $\pi$ contains only the finest partition $\left\{\{1\},\ldots, \{m\}\right\}$ then $\mathcal{S}_{A_1\ldots A_m}^\pi$
coincides with the convex set $\mathcal{S}_{A_1...A_m}$ of all (fully) separable states in $\D(\HH_{A_1...A_m})$ and hence  $E_{R,\pi}$ is
the relative entropy of entanglement $E_R$ in this case. It is shown in~\cite{achievability} that the convex set $\mathcal{S}_{A_1\ldots A_m}^\pi$ is weak* closed for any set of partitions $\pi$. So, since $\mathcal{S}\subseteq \mathcal{S}_{A_1\ldots A_m}^\pi$ for any $\pi$, Corollary~\ref{conv-th-c-2}F  implies the following statement. 
\medskip

\begin{property}\label{ER-conv-1} \emph{Let $(\rho_n)_{n\in\N}$ be a sequence of states in $\D(\HH_{A_1...A_m})$ converging to a state $\rho_0$ and $\pi$ any set of partitions. Then}
\begin{equation*}
\left\{\lim_{n\to+\infty} E_R(\rho_n)=E_R(\rho_0)<+\infty\right\}\quad\Longrightarrow\quad \left\{\lim_{n\to+\infty} E_{R,\pi}(\rho_n)=E_{R,\pi}(\rho_0)<+\infty\right\}.
\end{equation*}
\end{property} 
\medskip

By Proposition~\ref{ER-conv-1}, the continuity analysis of $E_{R,\pi}$ 
reduces to the continuity analysis of $E_R$. The following proposition gives several sufficient conditions for local continuity of the latter function. 

\medskip
\begin{property}\label{ER-conv-2} \emph{Let $(\rho_n)_{n\in\N}$ be a sequence of states in $\D(\HH_{A_1...A_m})$ converging to a state $\rho_0$. Then the relation
\begin{equation}\label{CS-r}
\lim_{n\to+\infty} E_R(\rho_n)=E_R(\rho_0)<+\infty
\end{equation}
holds provided that one of the following conditions is valid:}\footnote{$I(A_1\!:...:\!A_m)$ denotes the quantum mutual information defined in~\eqref{MI-m-def}.}
\begin{enumerate}[a)]
\item  \begin{equation}\label{mi-cont}
\lim_{n\to+\infty} I(A_1\!:...:\!A_m)_{\rho_n}=I(A_1\!:...:\!A_m)_{\rho_0}<+\infty;
\end{equation}
\item \emph{$c\rho_n\leq \sigma_n$ for  all $\shs n$, where $c>0$ and  $(\sigma_n)_{n\in\N}$ is a sequence of states in $\D(\HH)$ converging to a state $\sigma_0$ such that}
$$
\quad\lim_{n\to+\infty}E_R(\sigma_n)=E_R(\sigma_0)<+\infty;
$$
\item \emph{$\rho_n=p_n\rho^1_n+(1-p_n)\rho^2_n$, $n\geq0$, where $(\rho^1_n)_{n\in\N}$ and $(\rho^2_n)_{n\in\N}$
are sequences of states in $\D(\HH_{A_1...A_m})$ converging to states $\rho^1_0$ and $\rho^2_0$  such that
$$
\lim_{n\to+\infty}E_R(\rho^i_n)=E_R(\rho^i_0)<+\infty,\quad i=1,2,
$$
and $(p_n)_{n\in\N}$ is a sequence of numbers in $[0,1]$ converging to $p_0$.}

\end{enumerate}
\end{property}

\begin{proof}
If condition~a) is valid then relation~\eqref{re-c} holds with $\omega_n=\rho_n^{A_1}\otimes...\otimes\rho_n^{A_m}$, $n\geq0$.
So, this assertion follows directly from Theorem~\ref{conv-th}A. Condition~b) implies~\eqref{CS-r} by Corollary~\ref{conv-th-c-2}B. Finally, Condition~c) implies relation~\eqref{CS-r} by Corollary~\ref{conv-th-c-2}C. 
\end{proof}

\medskip
\begin{remark}\label{I-C-a} Condition a) in Proposition~\ref{ER-conv-2} has a clear physical interpretation: \emph{it states that local continuity of total correlation implies local continuity of the relative entropy of entanglement, which, in turn, implies local continuity of the relative entropy of $\pi$-entanglement for any set of partitions $\pi$ by Proposition~\ref{ER-conv-1}.} Similar condition was  obtained previously for the one-way classical correlation,
quantum discord, the $m$-partite squashed entanglement and c-squashed entanglement~\cite{Shirokov-review-AV}  (in the last two cases there is additional assumption that all the marginal states of the limit state $\rho_0$ have finite entropy).

Several sufficient conditions for the validity of~\eqref{mi-cont} are presented in~\cite[Section 5.2.3]{Shirokov-review-AV}. In particular, Proposition 21 in~\cite{Shirokov-review-AV} implies that relation~\eqref{mi-cont} holds if
\begin{equation*}
\lim_{n\to+\infty} S(\rho^{A_i}_n)=S(\rho^{A_i}_0)<+\infty
\end{equation*}
for at least $m-1$ indexes $i$, where $\rho^{A_i}_n$ is the marginal state of $\rho_n$ corresponding to the subsystem $A_i$.
\end{remark}\medskip

\begin{remark}\label{I-C-b} Condition b) in Proposition~\ref{ER-conv-2} implies, in particular, that
\emph{relation~\eqref{CS-r} holds provided that $c\rho_n\leq \sigma$ for all $\shs n$, where $c>0$ and  $\sigma$ is a state in $\D(\HH_{A_1...A_m})$ such that  $E_R(\sigma)$ is finite.} This assertion is similar to Simon's dominated convergence theorem for the von Neumann entropy~\cite[the Appendix]{lieb73b}.
\end{remark}\medskip

By applying Corollary~\ref{conv-th-c-2}E we obtain the following. 

\medskip
\begin{property}\label{ER-conv-3} \emph{Let $(\rho_n)_{n\in\N}$ be a sequence of states in $\D(\HH_{A_1...A_m})$ converging to a state $\rho_0$ such that relation
\eqref{CS-r} holds. Then
\begin{equation*}
\lim_{n\to+\infty} 
E_R\left(\frac{\Phi_n(\rho_n)}{\Tr\Phi_n(\rho_n)}\right) = E_R\left(\frac{\Phi_0(\rho_0)}{\Tr \Phi_0(\rho_0)}\right)<+\infty
\end{equation*}
for any sequence $\{\Phi_n\}$ of quantum operations from $A_1...A_m$ to $B_1...B_{l}$ strongly converging to a quantum operation $\Phi_0$ such that $\Phi_n(\rho_n)\neq0$ and  $\,\Phi_n(\mathcal{S}_{A_1...A_m})\subseteq \widetilde{\mathcal{S}}_{B_1...B_l}$ for all $n\geq0$, where
$\widetilde{\mathcal{S}}_{B_1...B_l}$ is the cone generated by the set $\mathcal{S}_{B_1...B_l}$ of all (fully) separable states in $\D(\HH_{B_1...B_l})$.}
\end{property} 

\medskip
\begin{remark}\label{ER-conv-r}
Continuity conditions given by Propositions~\ref{ER-conv-2} and~\ref{ER-conv-3} with $m=l=2$ are also valid
for the relative entropy distance to the PPT states, since that cone generated by the set of PPT states is weak* closed~\cite{achievability}.
\end{remark}
\medskip

Proposition~\ref{ER-conv-3} and Corollary~\ref{conv-th-c-2}G imply the following statement. 

\medskip
\begin{corollary}\label{ER-conv-3-c} \emph{Let $(\rho_n)_{n\in\N}$ be a sequence  in $\D(\HH_{A_1...A_m})$ converging to a state $\rho_0$.}\smallskip

\emph{If limit relation~\eqref{CS-r} holds then
\begin{equation}\label{CS-r+++}
\lim_{n\to+\infty} E_R(\rho^B_n)=E_R(\rho^B_0)<+\infty,
\end{equation}
where $B\doteq A_{i_1}...A_{i_l}$ is any $l$-partite subsystem of $A_1...A_m$ and $E_R$ denotes the relative entropy distance to the set
of all (fully) separable states in $\D(\HH_{A_{i_1}...A_{i_l}})$.}\smallskip

\emph{If limit relation~\eqref{CS-r+++} holds for some $l$-partite subsystem $B\doteq A_{i_1}...A_{i_l}$ of $A_1...A_m$
and for the complementary subsystem $\bar{B}\doteq A_1...A_m \setminus A_{i_1}...A_{i_l}$ and also
\begin{equation}\label{I-cond}
\lim_{n\to+\infty}I(B\!:\!\bar{B})_{\rho_n}=I(B\!:\!\bar{B})_{\rho_0}<+\infty
\end{equation}
then limit relation~\eqref{CS-r} holds.}
\end{corollary}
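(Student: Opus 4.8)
The plan is to recognize both implications as direct instances of the two results cited immediately before the statement, once everything is restated in the language of the sets $\Upsilon_{\FF}$. Taking $\FF=\mathcal{S}_{A_1\ldots A_m}$ one has $D_{\FF}=E_R$, so relation~\eqref{CS-r} says exactly that $(\rho_n)_{n\geq0}\in\Upsilon_{\mathcal{S}_{A_1\ldots A_m}}$; likewise, relation~\eqref{CS-r+++} for a subsystem $B=A_{i_1}\ldots A_{i_l}$ says precisely that $(\rho^B_n)_{n\geq0}\in\Upsilon_{\mathcal{S}_B}$, where $\mathcal{S}_B$ is the set of fully separable states of $B$. All the cones involved are weak* closed by the results of~\cite{achievability}, so both Proposition~\ref{ER-conv-3} and Corollary~\ref{conv-th-c-2}G are applicable.

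For the first implication I would apply Proposition~\ref{ER-conv-3} to the \emph{constant} sequence of quantum channels $\Phi_n\equiv\Tr_{\bar B}$, the partial trace onto $B$ that discards the complementary subsystem $\bar B$. This sequence converges strongly (trivially) to $\Phi_0=\Tr_{\bar B}$; its output $\Phi_n(\rho_n)=\rho^B_n$ is a state, hence nonzero with $\Tr\Phi_n(\rho_n)=1$; and the crucial set-theoretic fact is that the partial trace of a fully separable state is again fully separable, i.e. $\Tr_{\bar B}(\mathcal{S}_{A_1\ldots A_m})\subseteq\mathcal{S}_B\subseteq\widetilde{\mathcal{S}}_B$. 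This holds because $\Tr_{\bar B}$ sends a product $\bigotimes_j\psi^{(j)}_{A_j}$ to the product of those factors sitting inside $B$, and it is a trace-norm-continuous affine map, hence preserves convex hulls and trace-norm closures. With these verifications Proposition~\ref{ER-conv-3} yields $\lim_{n\to+\infty}E_R(\rho^B_n)=E_R(\rho^B_0)<+\infty$, which is~\eqref{CS-r+++}.

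For the second implication I would invoke the first claim of Corollary~\ref{conv-th-c-2}G applied to the bipartition $A_1\ldots A_m=B\bar B$, choosing $\FF_A=\mathcal{S}_B$, $\FF_B=\mathcal{S}_{\bar B}$ and $\FF_{AB}=\mathcal{S}_{A_1\ldots A_m}$. The hypothesis of that claim --- that $\omega_B\otimes\omega_{\bar B}\in\mathcal{S}_{A_1\ldots A_m}$ whenever $\omega_B\in\mathcal{S}_B$ and $\omega_{\bar B}\in\mathcal{S}_{\bar B}$ --- holds because the tensor product of a fully separable state of $B$ and a fully separable state of $\bar B$ is a fully separable state of $A_1\ldots A_m$. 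The two remaining hypotheses $(\rho^B_n)_{n\geq0}\in\Upsilon_{\mathcal{S}_B}$ and $(\rho^{\bar B}_n)_{n\geq0}\in\Upsilon_{\mathcal{S}_{\bar B}}$ are exactly~\eqref{CS-r+++} for $B$ and for $\bar B$, while~\eqref{I-cond} is precisely the mutual-information continuity condition required by Corollary~\ref{conv-th-c-2}G, once one notes that $I(B\!:\!\bar B)_{\rho}$ is the bipartite mutual information of the partition $B\,|\,\bar B$. The conclusion $(\rho_n)_{n\geq0}\in\Upsilon_{\mathcal{S}_{A_1\ldots A_m}}$ is then~\eqref{CS-r}.

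In truth there is no serious obstacle: the statement is essentially a repackaging of Proposition~\ref{ER-conv-3} and Corollary~\ref{conv-th-c-2}G specialized to the separable-state resource theory. The only points that genuinely need checking are the two elementary closure properties of fully separable states --- stability under partial trace, used in the first implication, and stability under tensor products across a bipartition, used in the second --- together with the identification of $I(B\!:\!\bar B)_{\rho}$ with the bipartite quantity appearing in the general corollary; all three are immediate from the definitions.
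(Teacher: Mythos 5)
Your proposal is correct and takes essentially the same route as the paper, which obtains the first implication from Proposition~\ref{ER-conv-3} (specialised to the partial-trace channel onto $B$) and the second from the first claim of Corollary~\ref{conv-th-c-2}G with $\FF_{AB}=\mathcal{S}_{A_1\ldots A_m}$; you merely spell out the closure properties of separable states that the paper leaves implicit.
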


\medskip
The question of whether the condition~\eqref{I-cond} is necessary remains open.

\section*{Acknowledgement}

The authors are grateful to A.\ S.\ Holevo for useful discussions 
surrounding the topic of this work. LL thanks the Freie Universit\"at Berlin for hospitality during the completion of this paper.
\bibliographystyle{unsrt}
\bibliography{biblio}

\end{document}